\providecommand{\DontPrintSemicolon}{\dontprintsemicolon}
\newcommand\bigTheta[1]{\Theta\left( #1 \right)}
\newcommand\littleO[1]{o\left( #1 \right)}
\newcommand\loc[1]{\widehat{{#1}}}
\newcommand\IQP[1]{\operatorname{QP}_{#1}}
\newcommand\SDP[1]{\operatorname{SDP}_{#1}}
\renewcommand\symmat[1]{\mathbb{S}_{#1}}  
\newcommand\sphere[1]{\mc S^{#1}}  
\newcommand\maxeig{\lambda_{\text{max}}}
\newcommand\epnet{\epsilon\operatorname{-net}}
\renewcommand\todo[1]{}
\renewcommand\todomath[1]{}
\title{A sub-constant improvement in approximating the
  positive semidefinite Grothendieck problem}
\date{}
\author{
  Roy Frostig \\
  Stanford University \\
  \texttt{rf@cs.stanford.edu}
  \and
  Sida I.\ Wang \\
  Stanford University \\
  \texttt{sidaw@cs.stanford.edu}}
\begin{document}

\maketitle

\begin{abstract}
  Semidefinite relaxations are a powerful tool for approximately
  solving combinatorial optimization problems such as MAX-CUT and the
  Grothendieck problem.
  By exploiting a bounded rank property of extreme points in the
  semidefinite cone, we make a sub-constant improvement in the
  approximation ratio of one such problem.
  Precisely, we describe a polynomial-time algorithm for the positive
  semidefinite Grothendieck problem -- based on rounding from the
  standard relaxation -- that achieves a ratio of $2/\pi +
  \Theta(1/{\sqrt n})$, whereas the previous best is $2/\pi +
  \Theta(1/n)$.
  We further show a corresponding integrality gap of
  $2/\pi+\tilde{O}(1/n^{1/3})$.
\end{abstract}

\section{Introduction}
\label{sec:intro}

Given a positive semidefinite (PSD) matrix $A \in \R^{n \by n}$, the
positive semidefinite Grothendieck problem is
\begin{align}
  \max_{x \in \hypc^n} \IQP{A}(x) \defeq x^\T A x.
  \label{eqn:iqp}
\end{align}
The problem is NP-hard; it is easy to see that MAX-CUT arises as the
special case when $A$ is a graph Laplacian. Elsewhere, the problem has
applications ranging from graph partitioning
\citep{alon2006approximating} to kernel clustering
\citep{khot08approximate, khot10sharp}. See
\citet{pisier12grothendieck} for a broad survey.

The polynomial-time algorithm that achieves the asymptotically best
known approximation ratio for this problem -- the constant $2/\pi
\approx 0.637$ -- is essentially the same as that described by
\citet{goemans1995improved} for MAX-CUT: the problem \eqnref{eqn:iqp}
is relaxed to a convex semidefinite program (SDP) that is equivalent
to
\begin{align}
  \maxproblem {X \in \R^{n \by n}} {\tr(X^\T A X)} {& \norm{X_i}_2^2 =
    1}
  \label{eqn:sdpxx}
\end{align}
where $\set{X_i}$ are the rows of $X$. We can think of each $X_i$ as comprising a relaxation of
the binary variable $x_i$ to an $n$-dimensional real unit
vector. This convex relaxation is solved to arbitrary accuracy and
its solution randomly rounded to a discrete one for
\eqnref{eqn:iqp}.\footnote{Though the algorithms are the same, the
  MAX-CUT approximation constant (about $0.878$) exceeds $2/\pi$ by an
  analysis that exploits all-positive edge weights and the graph
  Laplacian structure of $A$.}

There is evidence suggesting that this approximation is asymptotically
optimal. In particular, \citet{alon2006approximating} exhibit a random
problem instance $A$ whose asymptotic integrality gap is
$2/\pi$.
For the SDP relaxation approach, the integrality gap bounds the
approximation ratio from above.
In general, \citet{khot08approximate} show that if the unique games conjecture
holds then no polynomial time algorithm can exceed a $2/\pi$ ratio
guarantee.

However, we can still push further against this barrier. To do so, we
look to approximation ratios that, for any \emph{fixed $n$}, exceed an
asymptotic limit of $2/\pi$, and to favor those algorithms whose ratio
decays \emph{more slowly} to the asymptotic limit. Such an algorithm
is said to provide a \emph{sub-constant} improvement to the
$2/\pi$-approximation.  It is the best kind of improvement possible
that avoids confrontation with asymptotic hardness barriers.

The initial proof that the SDP rounding procedure used in
\citet{goemans1995improved} for MAX-CUT can be repurposed for a
$2/\pi$ approximation to the PSD Grothendieck problem is due to
\citet{nesterov1998semidefinite}. More recently,
\citet{briet2010positive} showed, by a careful analysis, that the
procedure in fact achieves an approximation ratio of
$2/\pi+\Theta(1/n)$. Intuitively, as the problem instance grows, the
dimension of the relaxed variables $\set{X_i}$ grows with it, and the
expected gain of rounding (over $2/\pi$) decreases inversely with the
relaxed dimension.

The key to our improvement is twofold.
First, the expected gain is larger when the relaxed variables
$\set{X_i}$ all lie in a low-dimensional subspace of $\R^n$. This
leads us to seek a polynomial-time dimensionality reduction that
improves the expected gain more than it decreases the SDP objective
value.
Second, by controlling the matrix rank of optimal solutions to
semidefinite programs, we can actually obtain an immediate such
dimensionality reduction -- down to below $\sqrt{2n}$ -- at
\emph{entirely no cost in objective}. To our knowledge, this is the
first use of such an essential property of SDP extreme points in the
context of approximation by SDP relaxation.
Lastly, our analysis can be sharpened for problem instances in which
$A$ itself is low rank. In these cases, we show an approximation ratio
of $2/\pi + \Theta(1/\rank(A))$. In particular, this implies a
characterization of problem instances -- those with $A$ of constant
rank -- for which the algorithm we present achieves a \emph{constant}
improvement.

\subsection{Formal setup and main result}
\label{sec:setup}

\paragraph{Notation}
We write $\symmat k$ for the set of symmetric $k \by k$ real matrices
and $\sphere{k}$ for the unit sphere $\{ x \in \R^k : \| x \|_2 = 1 \}$.
All vectors are columns unless stated otherwise.
If $X$ is a matrix, then $X_i \in \R^{1 \by k}$ is its $i$'th row.

\

\noindent We are interested in approximately solving the positive
semidefinite Grothendieck problem \eqnref{eqn:iqp} by rounding an
optimal solution of the relaxed problem \eqnref{eqn:sdpxx}.  As
stated, \eqnref{eqn:sdpxx} is not convex, but it does correspond
exactly to a (convex) semidefinite program through the change of
variables $S=XX^\T$:
\begin{align}
  \maxproblem { S \in \symmat{n} }
              { \SDP{A}(S) \defeq \tr(A S) }
              {
                & S \succeq 0 , \;\;
                  \diag(S) = \vec 1
              }.
  \label{eqn:sdp}
\end{align}
Note that if the rank of some feasible $S$ equals $k$, then the
corresponding $X$ feasible for \eqnref{eqn:sdpxx} has rows that are
effectively $k$-dimensional.

Because $\SDP{A}$ is a convex program, we can obtain an optimal
solution $\opt S$ of $\SDP{A}$ within a desired precision $\ep$ in
time polynomial in $n$ and $\log(1/\ep)$. From an optimal SDP point
$\opt S$, we can obtain a feasible point $\loc x \in \hypc^n$ for
$\IQP{A}$ by the following randomized rounding procedure: factor $S$
into $XX^\T$, sample a random vector $g$ from the unit sphere
$\sphere{k}$, and output $\loc x = \sign(X g)$.

This randomized rounding is analyzed independently by
\citet{goemans1995improved} and by
\citet{nesterov1998semidefinite}. Both show that the
approximation ratio is bounded above and below as follows:
\begin{equation}
\frac 2 \pi \leq \frac{ \E[\IQP{A}(\loc x)] } { \SDP{A}(\opt S) } \leq 1
\end{equation}
and, as mentioned above, this is the asymptotically optimal
approximation ratio of any polynomial-time algorithm, provided that
the unique games conjecture holds.

Adapting the rounding analysis of \citet{briet2010positive} and
controlling the rank of SDP solutions, we obtain in this paper an
approximation ratio of
\begin{align}
  \frac{2}{\pi} + \frac 1 {\pi\sqrt{2n}} + \littleO{ \frac1{\sqrt{n}} }
  &= \frac{2}{\pi} + \bigTheta{\frac 1 {\sqrt n}}.
  \label{eqn:our_ratio}
\end{align}

Section~\ref{sec:sdprank} shows that solutions of $\SDP{A}$ with low
rank -- bounded above by $\sqrt{2n}$ -- always exist, and describes a
polynomial-time algorithm for finding them.
Section~\ref{sec:roundfromk} shows the approximation ratio achieved by
the randomized rounding algorithm applied to a $k$-rank solution of
$\SDP{A}$ for a known $k \leq n$.  Combining these results via
$k=\sqrt{2n}$ yields the main result \eqnref{eqn:our_ratio}. Finally,
Section~\ref{sec:tightness} adapts the analysis of
\citet{alon2006approximating} to show a corresponding upper bound of
the integrality gap -- and hence the best approximation
guarantee possible via the SDP rounding approach -- is at most
$2/\pi+\tilde{O}(1/n^{1/3})$.

In addition to the main result, Section~\ref{sec:lowrankA} adapts the
rank reduction algorithm in \ref{sec:lowrankA} to further improve the
approximation ratio whenever $A$ has rank $o(\sqrt n)$.


\section{SDP solution rank}
\label{sec:sdprank}

Considering only the constraint count of a semidefinite program, while
ignoring its objective altogether, \citet{barvinok95problems} and
\citet{pataki1998rank} argue geometrically that SDP solutions have
bounded rank:

\begin{thm}[\citet{barvinok95problems, pataki1998rank}] Any
  semidefinite program of $m$ linear constraints has an optimal
  solution $\opt S$ such that $t(\rank(\opt S)) \leq m$, where $t(k) =
  k(k+1)/2$ is the $k$'th triangular number.
  \label{thm:sqrt2n}
\end{thm}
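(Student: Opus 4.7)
The plan is an iterative rank-reduction argument: I would show that if $\opt S$ is optimal of rank $r$ with $t(r) > m$, then there exists another optimal solution of strictly smaller rank. Iterating this reduction yields an optimum meeting $t(\rank(\opt S)) \leq m$.

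To set up the reduction step, factor $\opt S = V V^\T$ with $V \in \R^{n \times r}$ of full column rank and parametrize nearby points of the face of the PSD cone containing $\opt S$ by $S(M) = V(I + M) V^\T$ for $M \in \symmat{r}$. Since $I + M \succeq 0$ whenever $\norm{M}$ is small enough, every such $S(M)$ lies in the PSD cone. The ambient dimension of this parametrization is $\dim \symmat{r} = t(r)$.

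Next, I would translate the SDP's $m$ equality constraints $\tr(C_i S) = b_i$ into the linear equations $\tr((V^\T C_i V)\, M) = 0$ on $\symmat{r}$. Since $t(r) > m$, there is a nonzero $M_0$ in the common kernel. Along the trajectory $t \mapsto S(t M_0)$, the objective changes by $t \cdot \tr((V^\T A V)\, M_0)$; if this coefficient were nonzero, one sign of $t$ would strictly improve the objective while preserving feasibility, contradicting optimality of $\opt S$. Hence the coefficient vanishes and $\pm M_0$ are feasible, objective-preserving perturbations.

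Finally, I would slide along $\pm M_0$ until the rank drops. Since $M_0 \neq 0$ is symmetric, it has an eigenvalue of appropriate sign forcing $I + tM_0$ or $I - tM_0$ to become singular at a finite positive $t^{\star}$; at that critical point, $V(I \pm t^{\star} M_0) V^\T$ is feasible, optimal, and of rank strictly less than $r$, because $V$ has full column rank. The main subtlety I anticipate is exactly this boundary-hitting step -- guaranteeing a finite critical $t^{\star}$ regardless of the signature of $M_0$ -- which is resolved by exploiting whichever of $M_0$'s positive or negative eigenvalues exists. Inequality constraints, should any appear, are handled by restricting attention to those tight at $\opt S$, keeping the effective constraint count at most $m$.
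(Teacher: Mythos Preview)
Your proposal is correct and is precisely the Pataki argument the paper cites for this theorem. The paper does not prove the general statement; instead it presents Algorithm~\ref{alg:rankred}, a constructive specialization to $\SDP{A}$, whose verification is the same rank-reduction step you describe: factor $\opt S = XX^\T$, find a nonzero $Y \in \symmat{k}$ annihilated by the constraint and objective linear forms, rescale so that $\maxeig(Y)=1$ and hence $I_k - Y$ is singular, and output $X(I_k - Y)X^\T$. The one substantive difference is how the objective is handled: you invoke optimality of $\opt S$ to force $\tr((V^\T A V)\,M_0) = 0$ after finding $M_0$ in the kernel of only the $m$ constraint forms, whereas the paper adds the objective as an explicit $(n{+}1)$st homogeneous equation. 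Your route recovers the sharp bound $t(\rank(\opt S)) \leq m$ stated in the theorem; the paper's variant yields $t(k') \leq n+1$, one unit weaker, but avoids the optimality-contradiction step and is more directly algorithmic.
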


Since $\SDP{A}$ has only $n$ constraints -- those of the form $S_{ii}
= 1$ -- it follows that it has an optimal solution whose rank does not
exceed roughly $\sqrt{2n}$.

\citet{alfakih98embeddability} give a concrete algorithm for finding
the low-dimensional Euclidian embeddings shown to exist in the proof
of \citet{barvinok95problems}. The algorithm is essentially a
constructive version of the existence proof concurrently given by
\citet{pataki1998rank}.
By simplifying their key ideas and translating them to the problem of
rank-reducing solutions of $\SDP{A}$, we obtain
Algorithm~\ref{alg:rankred}, which reduces the rank of an SDP solution
$S$ without changing its objective value nor affecting feasibility.
The algorithm proceeds by solving a homogeneous linear system that is
underdetermined whenever $\rank(S)$ is sufficiently large.

\begin{algorithm}[h!]
  \DontPrintSemicolon

  \SetKwInOut{Input}{Input}\SetKwInOut{Output}{Output}
  \Input{SDP solution $S \in \symmat n$ of rank $k$, with $t(k) > n+1$.}
  \Output{SDP solution $S'$ of rank $k'$, with $t(k') \leq n+1$.}

  \BlankLine
  Note that $t(k) = k(k+1)/2$ is the dimension of $\symmat{k}$.

  \BlankLine
  Factor $S = X X^\T$ with $X \in \R^{n \by k}$. \;

  Solve $n+1$ homogeneous linear equations in $t(k)$ variables $Y \in \symmat{n}$: \;
  \quad\quad $\tr(X_i^\T X_i\ Y) = 0$ for each of $X$'s $n$ row vectors $X_i \in \R^{1 \by k}$ \;
  \quad\quad $\tr(X^\T A X\ Y) = 0$

  \BlankLine  
  Negate and scale $Y\neq 0$ if needed, so that its largest eigenvalue $\maxeig=1$. \;
  Set $U  \gets I_k - Y$ and $S' \gets X U X^\T$.

  \caption{Rank-reduction of an $\SDP{A}$ solution}
\label{alg:rankred}
\end{algorithm}

To see that Algorithm~\ref{alg:rankred} delivers on its promises, observe that $\rank(S')
\leq \rank(U) < \rank(S)=k$ because $\det(U)=\det(Y - \maxeig I_k) =
0$ for the eigenvalue $\maxeig=1$ of $Y$. We can check that
$U\succeq 0$ and therefore $S' \succeq 0$. Because we found $Y$
satisfying the linear system, we can also check that as far the constraints
and objective of $\SDP{A}$ are concerned, $S'$ is no worse than $S$.
The resulting objective value is
\begin{align}
\label{eq:objnochange}
\nonumber
\tr (A S') &= \tr(A X U X^\T) = \tr(X^\T A X U) =\tr(X^\T A X (I_k -
Y)) \\ 
& = \tr(X^\T A X I_k) = \tr(A X X^\T ) = \tr(A S).
\end{align}
Similarly, the new solution remains feasible:
\begin{align}
  S'_{ii} &= X_i U X^\T_i = \tr(X^\T_i X_i U) = \tr(X^\T_i X_i I_k) =
  S_{ii}=1.
\end{align}

\subsection{Low rank problem instances}
\label{sec:lowrankA}
Further rank-reduction is possible for problem instances with
additional structure. In this section we show that, when $A$ is low
rank, it is possible to modify Algorithm~\ref{alg:rankred} so that it
reduces solution rank to the rank of $A$.

To exploit the rank of $A$, we replace the linear homogeneous
equations in Algorithm~\ref{alg:rankred} with the semidefinite program,
\begin{equation}
\label{eq:posdefmodifier}
  Y \succeq 0,\ \tr(X^\T A X\ Y) = 0,\  Y \neq 0,
\end{equation} 
and claim that is it feasible whenever $k > \rank(A)$.
To see this, diagonalize $X^\T A X = Q \diag(\lambda) Q^\T$ with
orthonormal eigenvectors $Q \in \R^{n \by k}$ and eigenvalues $\lambda
\in \R^k$.
Since $k > \rank(A)$, there exists $i$ such that $\lambda_i=0$. If we
assign the non-zero vector $\lambda' \in \R^k$ as
\begin{align}
\lambda'_j \gets \indic [\lambda_j = 0],\  j = 1, \ldots, k, 
\end{align}
then $Y = Q \diag(\lambda') Q^\T$ satisfies \eqref{eq:posdefmodifier}. 

After solving for $Y$, the last step of Algorithm~\ref{alg:rankred}
computes the rank-reduced solution $S'$.
We can follow \eqref{eq:objnochange} to check that $S'$ gives us the
same objective value. For feasibility, we have
\begin{align*}
  S'_{ii} &= X_i U X^\T_i =  X_i (I_k-Y) X^\T_i = S_{ii} - X_i Y
  X^\T_i = 1 - X_i Y X^\T_i.
\end{align*}
Recall $\maxeig(Y)=1$, $Y \succeq 0$, and $||X_i||_2 = 1$. Therefore,
$0 \leq S'_{ii} \leq 1$.\footnote{Although the constraint $S_{ii} = 1$
  appears in the formal problem setup, the constraint $S_{ii} \leq 1$
  is equivalent for the PSD Grothendieck problem due to having $A
  \succeq 0$.}
 



\section{Rounding from low rank}
\label{sec:roundfromk}
The following lemma states that the approximation ratio due to
randomized rounding is better when rounding from lower-rank SDP
solutions. The statement is a simple consequence of Lemma 1 of
\citet{briet2010positive}, which makes important use of the results of
\citet{schoenberg1942positive} together with Grothendieck's identity.
\begin{lem} 
  Fix a weight matrix $A \succeq 0$ and $X \in \R^{n \by k}$ with $X_i
  \in \sphere k$, the unit sphere. Let $g$ be a random vector from
  $\sphere{k}$ and
  \begin{align}
  \gamma(k)
    \eqdef \frac 2 k \left( \frac { \Gamma((k+1)/2) } { \Gamma(k/2) }
    \right)^2 = 1 - \bigTheta{\frac1{k}}.
  \end{align}
  Then the expected approximation ratio obtained by randomized rounding
  \begin{align}
    R(k)
    &\eqdef
      \frac{\E_g[\IQP{A}(\sgn(X g))]}{\SDP{A}(X X^\T)}
  \end{align}
  is at least
  \begin{align}
    \frac{2}{\pi\gamma(k)}
    &=
      \frac{2}{\pi} \left( 1+ \frac{1}{2k} + \littleO{ \frac 1 k } \right).
  \end{align}
\label{lem:roundingratio}
\end{lem}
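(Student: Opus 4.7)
The plan is to combine Grothendieck's identity with a Schoenberg-positive expansion of $\arcsin$, reduce the entire claim to a single coefficient identification, and then extract the asymptotic from Stirling's formula.

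First, I would apply Grothendieck's identity entrywise. Setting $T_{ij} = \langle X_i, X_j\rangle$, the identity gives $\E_g[\sgn(\langle u, g\rangle)\sgn(\langle v, g\rangle)] = (2/\pi)\arcsin(\langle u, v\rangle)$ for any unit vectors $u, v \in \sphere{k}$ and $g$ uniform on $\sphere{k}$. Summing against $A$ therefore yields $\E_g[\IQP{A}(\sgn(Xg))] = (2/\pi)\sum_{i,j} A_{ij}\arcsin(T_{ij})$, while $\SDP{A}(XX^\T) = \sum_{i,j} A_{ij} T_{ij}$. So the bound $R(k) \geq 2/(\pi\gamma(k))$ is equivalent to the scalar inequality
$$\sum_{i,j} A_{ij}\arcsin(T_{ij}) \;\geq\; \frac{1}{\gamma(k)} \sum_{i,j} A_{ij} T_{ij}.$$

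Second, I would invoke Schoenberg's theorem: on the sphere $\sphere{k-1}$ in $\R^k$, a continuous $F:[-1,1]\to\R$ is positive-definite (i.e.\ $[F(T_{ij})]$ is PSD for every unit-vector Gram matrix $T$) if and only if it admits a non-negative-coefficient expansion in the ultraspherical Jacobi basis $\jac_m$ of the appropriate parameter $\alpha = (k-3)/2$. Grothendieck's identity itself displays $\arcsin$ as a PSD kernel on every sphere, so one gets an expansion $\arcsin(t) = \sum_{m \geq 0} a_m^{(k)} \jac_m(t)$ with $a_m^{(k)} \geq 0$. Since $A \succeq 0$, for each $m$ the quantity $\sum_{i,j} A_{ij} \jac_m(T_{ij})$ is the Frobenius inner product of two PSD matrices and thus non-negative. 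Dropping every $m \geq 3$ term therefore lower-bounds the left side by its linear-in-$t$ contribution. The substantive step is to compute that linear coefficient — by integrating $\arcsin$ against $\jac_1$ with the Jacobi weight on $[-1,1]$ — and verify that, once $\jac_1$ is renormalized back to $t$, it equals $1/\gamma(k)$. This coefficient identification is exactly the content of Lemma~1 of \citet{briet2010positive}, which I would cite rather than reprove.

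Third, the asymptotic $\gamma(k) = 1 - \bigTheta{1/k}$ follows from the standard Gamma-ratio estimate $\Gamma((k+1)/2)/\Gamma(k/2) = \sqrt{k/2}\,(1 - 1/(4k) + O(1/k^2))$, obtained from Stirling's series or Gautschi's inequality. Squaring and multiplying by $2/k$ gives $\gamma(k) = 1 - 1/(2k) + O(1/k^2)$, so $2/(\pi\gamma(k)) = (2/\pi)(1 + 1/(2k) + \littleO{1/k})$ as stated. The main obstacle throughout is the closed-form identification of the linear Schoenberg coefficient as $1/\gamma(k)$; once that calculation is in hand — or invoked from Bri\"et--Oliveira--Vallentin — the rest is a short assembly of Grothendieck's identity, Schoenberg PSD-positivity, and the Gamma-ratio expansion.
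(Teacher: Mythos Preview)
Your proposal is correct and follows essentially the same route as the paper: Grothendieck's identity to rewrite the rounded objective as $\tfrac{2}{\pi}\tr(A\,\arcsin(XX^\T))$, Schoenberg's characterization together with Lemma~1 of \citet{briet2010positive} to establish that $\arcsin(t) - t/\gamma(k)$ is of positive type on $\sphere{k}$, and then $\tr(AY)\geq 0$ for the resulting PSD matrix $Y$. The only cosmetic difference is that the paper packages the subtraction as a single positive-type function $f(t)=\tfrac{2}{\pi}(\arcsin t - t/\gamma(k))$ and checks $\tr(AY)\geq 0$ directly, whereas you expand $\arcsin$ in the Jacobi basis and truncate to the $m=1$ term; these are the same argument.
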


\begin{proof}
Grothendieck's identity states that, for $u, v \in \R^k$ and $g$
drawn uniformly from the unit sphere $\sphere k$,
\begin{align}
\E_g \left[ \sgn(u^\T g) \sgn(v^\T g) \right] = \frac 2 \pi \arcsin(u^\T v).
\end{align}

Let $Y = f(X X^\T) \in \R^{n \by n}$ be the elementwise application of
the scalar function
\begin{align}
  f(t) = \tfrac{2}{\pi} \left( \arcsin(t) - \tfrac{t}{\gamma(k)} \right).
\end{align}
Lemma 1 in \citet{briet2010positive} shows that $f(t)$ is a function
of the \emph{positive type} on $\sphere k$, which by definition means
that $Y \succeq 0$ provided $X_i \in \sphere k$ for all $i$. Their
result is based on
(a) computing inner products between orthogonal
Jacobi polynomials, together with
(b) the characterization due to \citet{schoenberg1942positive} of positive definite functions on $\sphere k$ in terms of Jacobi polynomials.

We have that $\tr(AY) \geq 0$.  Rearranging terms and applying
Grothedieck's identity:
\begin{align}
  & 0 \leq \tr(AY)
    = \tr \left(A \frac{2}{\pi} \left( \arcsin(X X^\T) - \frac{X X^\T}{\gamma(k)} \right) \right) \\
  & \iff  \tr \left( A \frac{2}{\pi} \arcsin(X X^\T) \right) \geq \frac{2}{\pi \gamma(k)} \tr(A X X^\T) \\
  & \iff  \E_g[\IQP{A}(\sgn(Xg))] \geq \frac{2}{\pi \gamma(k)} \SDP{A}(XX^\T),
\end{align}
which proves the claim.
\end{proof}

\section{Integrality gap}
\label{sec:tightness}

How much further could we hope to improve the additive sub-constant
term in the ratio between rounded and relaxed solutions?  This section
bounds the answer by providing an integrality gap of $2/\pi +
\tilde{O}(1/n^{1/3})$.

To establish the gap, we set out to construct, for every $n$, a matrix $A \in
\R^{n \by n}$ so that
$\frac{\IQP{A}(x^*)} {\SDP{A}(S^*)} \leq 2/\pi + \tilde{O}(1 / n^\alpha )$.
The particular construction we consider achieves $\alpha = 1/3$.
We first outline and reproduce some results from Section~5.2 of
\citet{alon2006approximating}, and then expand them to analyze the
sub-constant rates.

The authors' original construction uses $n$ random unit vectors
$v_i \in \sphere p$ for $i=1,\ldots, n$ and takes $A_{ij} = \frac1{n} v_i^\T v_j$. If we set $S_{ij}=A_{ij}$ then
\begin{align}
\label{eq:sdplowerbound}
\SDP{A}(S^*) \geq \SDP{A}(S) = \tr(AS) =\frac1{n^2} \sum_{ij} (v_i^\T v_j)^2 \to 1/p,
\end{align}
where $1/p$ arises as the average inner product between random vectors
on $\sphere p$.

Under the QP, for any $x \in \hypc^n$, we have
\begin{align}
\IQP{A}(x) = \sum_{i,j}^n A_{ij} x_i x_j = \norm{ \frac1{n} \sum_{i=1}^n
v_i x_i }^2.
\end{align}
Take $x^* \in \arg\max_x \IQP{A}(x)$ and let $c$ be the unit vector the direction of
$\sum_{i=1}^n x^*_i v_i$. It is optimal to accumulate in
the correct direction $c$, so $x^*_i = \sgn(v_i^\T c)$ and hence
\begin{align}
\IQP{A} (x^*) = \left(\frac1{n} \sum_{i=1}^n x^*_i v_i^\T c\right)^2 =
\left(\frac1{n} \sum_{i=1}^n \abs{v_i^\T c} \right)^2 \to ( \E \bigsqbra{ \abs{v^\T c} })^2.
\end{align}
\citet{alon2006approximating} computed this expectation; it is easy to
verify that the sub-constant term $\Theta(1/p)$ appears therein
as follows:
\begin{align}
\label{eq:expabs}
\E\bigsqbra{ \abs{v^\T c} } &= \left(\sqrt{\frac{2}{\pi}} +
  \Theta\left(\frac1 p\right) \right) \frac1{\sqrt{p}} .
\end{align}

We would now like to maximize an $n$-sample estimate of
\eqnref{eq:expabs} over the sphere. The original analysis does this by
replacing maximization over the sphere with the same over a
corresponding $\ep$-net:
\begin{align}
  \label{eq:maxoversample}
\IQP{A} (x^*) = \left( \max_{d \in \sphere p}
  \sum_{i=1}^n  \frac1{n}   \abs{v_i^\T d} \right)^2 = \left( \left( \max_{d \in \epnet(\sphere p)}
  \sum_{i=1}^n \frac1{n}   \abs{v_i^\T d} \right) + O(\epsilon) \right)^2.
\end{align}
Now, $n$ needs to be big enough so the variance of the $n$-sample estimator,
\begin{align}
\label{eq:variance}
\Var\left[\sum_{i=1}^n \frac1{n}   \abs{v_i^\T d} \right] = \frac1{n}
\Var\bigsqbra{ \abs{v_i^\T d} } = O\left(\frac1{np}\right),
\end{align}
is small enough to safely maximize over an $\epnet$ of size
$O(\frac1{\epsilon^p})$. The integrality gap question then reduces to
the question of how big $n$ should be.

To handle the max, we observe that $\sum_{i=1}^n \frac1{n} \abs{v_i^\T
  d}$ is sub-Gaussian with parameter $O(1/\sqrt{np})$, so it enjoys the
following bound:
If $X_i \sim \gaussian(0, \sigma^2)$ (or if $X_i$ is sub-Gaussian with parameter $\sigma$) are i.i.d across $i=1,\ldots,m$, then
  $\E[\max_i (X_i)] \leq \sigma \sqrt{2 \log(m)}$. 

Now we proceed to bound $\IQP{A}(x^*)$ from above:
\begin{align}
  \label{eq:iqpupperbound}
\IQP{A} (x^*) &= \left( \left( \max_{d \in \epnet(\sphere p)}
  \sum_{i=1}^n \frac1{n}   \abs{v_i^\T d} \right) +
O(\epsilon)\right)^2 \\
& \leq \left( \left(\sqrt{\frac{2}{\pi}} +
  \Theta\left(\frac1 p\right) \right) \frac1{\sqrt{p}}  + \sqrt{\frac{2
\log(\frac1{\epsilon^p})} {np}} + O(\epsilon) \right)^2.
\end{align}

Pick a small enough $\epsilon$ so that the the $O(\epsilon)$ term may
be ignored.
This can be done because the second additive term only grows as
$\log(1/\epsilon)$, so we can pick $\epsilon = o(1 / (p \sqrt p))$
to enforce that the first additive term is dominant.
Multiply both sides of \eqref{eq:iqpupperbound} by the inequality
$1/\SDP{A}(S^*) \leq p$ shown in \eqref{eq:sdplowerbound}. This
yields:
\begin{align}
  \label{eq:finalbound}
\frac{\IQP{A} (x^*)}{\SDP{A}(S^*)}
& \leq \left( \left(\sqrt{\frac{2}{\pi}} +
  \Theta\left(\frac1 p\right) \right) + \sqrt{\frac{2
p \log(\frac1{\epsilon})} {n}} \right)^2.
\end{align}
In order to balance the two sub-constant additive terms, we can set
$n=p^3$. This construction has intergrality gap less than $2 / \pi +
\tilde{O}(1/{n^{1/3}})$.

\section{Concluding remarks}
\label{sec:conc}

We demonstrated a sub-constant improvement
in approximating the PSD Grothendieck problem. Although the
improvement disappears asymptotically, it decays slowly via an
additive term whose constant factors we have made explicit.
Two of the three main ingredients of this result are obtained by
adapting existing analyses to explicitly account for effective relaxed
dimension in the ``first order'' sub-constant additive term. The
remaining ingredient comes from exploiting the spectral sparsity of
extreme points in the SDP cone, an analysis tool of independent
interest.
With the same tool set, we further characterized a class of problem
instances for which the new approximation ratio enjoys an additional
-- even constant -- advantage.

An immediate direction for future work is to ask whether the
sub-constant improvement described here has downstream
implications for other approximation algorithms. Another is whether the result
can be improved, or conversely whether the integrality gap is actually
smaller than shown in Section~\ref{sec:tightness}. A more general
question is whether this same set of tools can be applied to other
SDP relaxation-based algorithms in order to improve their approximation ratio -- by
an additive sub-constant term or otherwise -- with immediate
candidates being MAX-CUT, $k$-coloring, and kernel
clustering.

\bibliographystyle{unsrtnat}
\bibliography{all}

\begin{thebibliography}{11}
\providecommand{\natexlab}[1]{#1}
\providecommand{\url}[1]{\texttt{#1}}
\expandafter\ifx\csname urlstyle\endcsname\relax
  \providecommand{\doi}[1]{doi: #1}\else
  \providecommand{\doi}{doi: \begingroup \urlstyle{rm}\Url}\fi

\bibitem[Alon and Naor(2006)]{alon2006approximating}
N.~Alon and A.~Naor.
\newblock Approximating the cut-norm via grothendieck's inequality.
\newblock \emph{SIAM Journal on Computing}, 35\penalty0 (4):\penalty0 787--803,
  2006.

\bibitem[Khot and Naor(2008)]{khot08approximate}
S.~Khot and A.~Naor.
\newblock Approximate kernel clustering.
\newblock In \emph{Foundations of Computer Science (FOCS)}, 2008.

\bibitem[Khot and Naor(2010)]{khot10sharp}
S.~Khot and A.~Naor.
\newblock Sharp kernel clustering algorithms and their associated grothendieck
  inequalities.
\newblock In \emph{Symposium on Discrete Algorithms (SODA)}, 2010.

\bibitem[Pisier(2012)]{pisier12grothendieck}
G.~Pisier.
\newblock Grothendieck's theorem, past and present.
\newblock \emph{Bulletin of the American Mathematical Society}, 49:\penalty0
  237--323, 2012.

\bibitem[Goemans and Williamson(1995)]{goemans1995improved}
M.~Goemans and D.~Williamson.
\newblock Improved approximation algorithms for maximum cut and satisfiability
  problems using semidefinite programming.
\newblock \emph{Journal of the ACM (JACM)}, 42\penalty0 (6):\penalty0
  1115--1145, 1995.

\bibitem[Nesterov(1998)]{nesterov1998semidefinite}
Y.~Nesterov.
\newblock Semidefinite relaxation and nonconvex quadratic optimization.
\newblock \emph{Optimization methods and software}, 9:\penalty0 141--160, 1998.

\bibitem[Bri{\"e}t et~al.(2010)Bri{\"e}t, d.~O.~Filho, and
  Vallentin]{briet2010positive}
J.~Bri{\"e}t, F.~M. d.~O.~Filho, and F.~Vallentin.
\newblock The positive semidefinite {G}rothendieck problem with rank
  constraint.
\newblock In \emph{Automata, Languages and Programming}, pages 31--42, 2010.

\bibitem[Barvinok(1995)]{barvinok95problems}
A.~I. Barvinok.
\newblock Problems of distance geometry and convex properties of quadratic
  maps.
\newblock \emph{Discrete \& Computational Geometry}, 13:\penalty0 189--202,
  1995.

\bibitem[Pataki(1998)]{pataki1998rank}
G.~Pataki.
\newblock On the rank of extreme matrices in semidefinite programs and the
  multiplicity of optimal eigenvalues.
\newblock \emph{Mathematics of Operations Research}, 23\penalty0 (2):\penalty0
  339--358, 1998.

\bibitem[Afakih and Wolkowicz(1998)]{alfakih98embeddability}
A.~Afakih and H.~Wolkowicz.
\newblock On the embeddability of weighted graphs in euclidean spaces.
\newblock Technical report, University of Waterloo, 1998.

\bibitem[Schoenberg(1942)]{schoenberg1942positive}
I.~J. Schoenberg.
\newblock Positive definite functions on spheres.
\newblock \emph{Duke Mathematical Journal}, 9:\penalty0 96--108, 1942.

\end{thebibliography}


\end{document}